\newlength{\dhatheight}
\newtheorem{theorem}{Theorem}
\newtheorem{lemma}{Lemma}
\begin{document}
\title{Optimal Energy Management for Energy Harvesting Transmitter and Receiver with Helper} \author{\IEEEauthorblockN{\normalsize
    Mohsen Abedi$^*$, Mohammad Javad Emadi$^*$, Behzad Shahrasbi$^\dagger$}
  \IEEEauthorblockA{\small $^*$Electrical Engineering Department, Amirkabir University of Technology, Tehran, Iran\\
  $^\dagger$Electrical Engineering and Computer Science Department, University of Central Florida\\
    E-mails: \{mohsenabedi, mj.emadi\}@aut.ac.ir, behzad@eecs.ucf.edu}
    \vspace{-.75cm}
    }
\maketitle
\begin{abstract}
We study energy harvesting (EH) transmitter and receiver, where the receiver decodes data using the harvested energy from the nature and from an independent EH node, named helper. Helper cooperates with the receiver by transferring its harvested energy to the receiver over an orthogonal fading channel. We study an offline optimal power management policy to maximize the reliable information rate. The harvested energy in all three nodes are assumed to be known. We consider four different scenarios; First, for the case that both transmitter and the receiver have batteries, we show that the optimal policy is transferring the helper's harvested energy to the receiver, immediately. Next, for the case of non-battery receiver and full power transmitter, we model a virtual EH receiver with minimum energy constraint to achieve an optimal policy. Then, we consider a non-battery EH receiver and EH transmitter with battery. Finally, we derive optimal power management wherein neither the transmitter nor the receiver have batteries. We propose three iterative algorithms to compute optimal energy management policies. Numerical results are presented to corroborate the advantage of employing the helper.
\end{abstract}
\IEEEpeerreviewmaketitle
\section{Introduction}
Green communications is a new concept which deals with using the harvested energy from the nature and efficiently transmit data over the communications networks. Thus, optimal harvested energy management policies have gained lots of interest in both theoretical and practical perspectives.

The optimal policy for the case where both source and destination harvest energy in a point-to-point data link is considered in \cite{31}. In \cite{32} and \cite{33}, authors find the optimal sampling rate to make tradeoff between sampling and decoding energy cost at EH receiver. In \cite{34}, authors study data link optimization for the case of channel state information at the receiver. In \cite{17}, energy cooperation between transmitter and two relays is considered. In \cite{30}, fading multiple access channel optimization with battery capacity and energy consumption constraints is studied. EH transmitter and relay with an energy arrival constraints is studied in \cite{13,14,15}. In\cite{ulu}, it is assumed that transmitter and the receiver rely exclusively on the energy harvested from the nature. The receiver uses the harvested energy for the \emph{decoding process}. The idea of energy cost of the processing for EH transmitter is also studied in \cite{gunduz}.

In this paper, we present EH transmitter and receiver with \emph{decoding cost} wherein there is an \emph{energy cooperation} link from EH \emph{helper} to the receiver. Utilizing energy link from the helper to the receiver, obviously increases the reliable data rate, especially when the harvested energy at the receiver is less than the required energy for the decoding process. We also assume that the \emph{helper has battery} to save the harvested energy. For each proposed EH scenario at transmitter and receiver, we propose an optimal energy management policy at the three nodes to maximize the reliable data rate. We consider the following four scenarios;

\begin{itemize}
\item \emph{EH transmitter and receiver with batteries}\\
We study the case where both transmitter and receiver are equipped with batteries. Thus they can save the harvested energy for the upcoming time slots. For this scenario, we analytically prove that the energy from the helper must be transferred to the receiver as soon as it is harvested from the nature. So, we present a closed form solution using the scheme given in \cite{ulu}.

\item \emph{Full power transmitter and non-battery receiver}\\
Assume that the transmitter has unlimited energy at all the time slots (full power) and the receiver is not equipped with a battery and has to consume all or part of the harvested energy at the moment. For this case, we propose an iterative algorithm based on a \emph{virtual} EH receiver with some minimum constraints on powers.

\item \emph{Transmitter with battery and non-battery EH receiver}\\
EH transmitter has a battery, so it then can store the harvested energy but the receiver is not equipped with a battery. So the receiver can only use its harvested energy at the current time slot. In this case, we decompose the solution to inner and outer problems and we prove that the optimal outer solution is derived iteratively by assuming that EH transmitter is full power.

\item \emph{EH transmitter and receiver with no batteries}\\
In this case, neither the transmitter nor the receiver has battery. So both of them can only use their harvested energy at the harvested time slot. For this case, we propose an iterative algorithm for optimal energy management.
\end{itemize}
For all scenarios, we present concrete proofs for optimality of all related iterative algorithms.

The rest of this paper is organized as follows. Section II describes the system model and problem statement. Section III presents problem statements and solutions. Section IV provides a numerical result and conclusions.

\begin{figure}[!t]
\centering
\psfrag{tx}[][][.8]{Transmitter}
\psfrag{m}[][][.8]{message}
\psfrag{H}[][][.8]{Helper}
\psfrag{D}[][][.8]{Receiver}
\psfrag{Z}[][][.8]{Noise}

\psfrag{a}[][][0.8]{$\alpha$}
\psfrag{dot}[][][.8]{$\cdots$}

\psfrag{e1}[][][.7]{$E_1$}
\psfrag{e2}[][][.7]{$E_2$}
\psfrag{e3}[][][.7]{$E_3$}
\psfrag{en}[][][.7]{$E_N$}

\psfrag{d1}[][][.7]{$\bar{E}_1$}
\psfrag{d2}[][][.7]{$\bar{E}_2$}
\psfrag{d3}[][][.7]{$\bar{E}_3$}
\psfrag{dn}[][][.7]{$\bar{E}_N$}

\psfrag{r1}[][][.7]{$H_1$}
\psfrag{r2}[][][.7]{$H_2$}
\psfrag{r3}[][][.7]{$H_3$}
\psfrag{rn}[][][.7]{$H_N$}

\includegraphics[width=8cm]{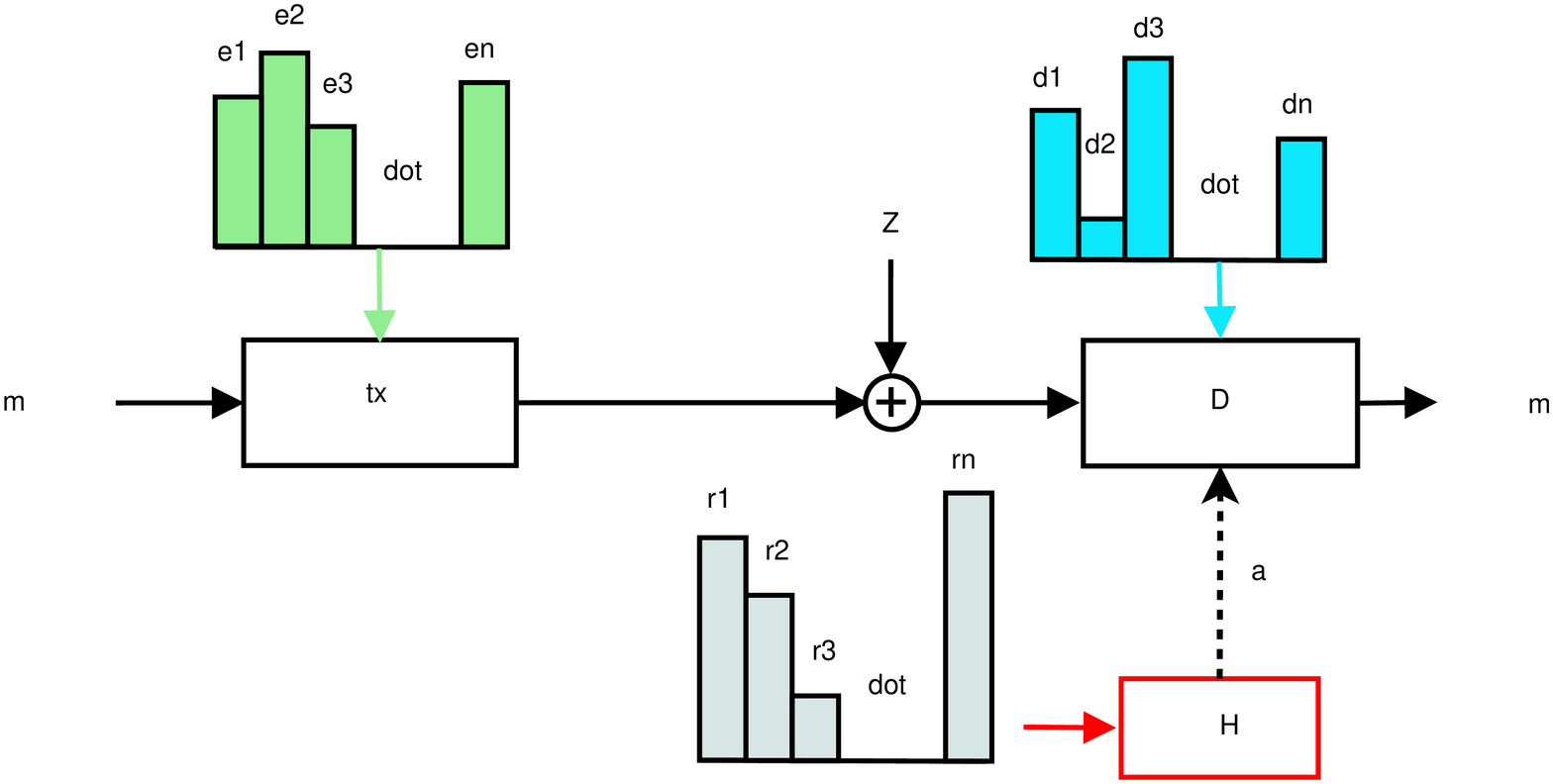}
\caption{\small Cooperative transmission with energy harvesting transmitter, receiver and helper.}
\vspace{-0.6 cm}
\end{figure}

\section{System Model}
In this paper, we consider point-to-point communications where the transmitter sends and the receiver receives information by utilizing a helper, see Fig. 1. It is assumed that all the three nodes are energy harvesters and the receiver needs sufficient amount of power, known as \emph{decoding cost}, to decode data transmitted by the source. Although the receiver harvests energy from the nature, it can also receive energy from the helper at given time slots. However, the transmitter only harvests energy from the nature. Without loss
of generality, the presented method is applicable to the cases
where the helper transfers energy to the transmitter as well. The communications time is divided into $N$ equal time slots and the energy is harvested at the beginning of each time slot. At the $i^{th}$ slot, the harvested energies at the transmitter, receiver and the helper are denoted by ${\{E_i\}}_{i=1}^N$, ${\{\bar E_i\}}_{i=1}^N$, and ${\{H_i\}}_{i=1}^N$, respectively. We have assumed that at each time slot, the helper transfers some energy units to the receiver over a fading channel with energy efficiency $\alpha \in [0,1]$. In other words, when the helper sends $\delta_i$ units of its energy, the destination receives only $\alpha \times \delta_i$ amount of energy.

The channel between the transmitter and the receiver is AWGN one with zero-mean and unit variance noise, that is
 \begin{equation}
Y_i =X_i+Z_i,  \text{~for~}  i=1,...,N
\end{equation}
 where $i$ indicates the time slot, $X_i$ is the transmitted symbol with $\mathbb{E}(X_{i}^2)=p_i$, $Y_i$ is the received symbol at the destination and  $Z_i\thicksim\mathcal N(0,1)$. Since the receiver has no data buffer, it must decode the message at the end of the each time slot. Therefore, according to the normalized AWGN channel, the achievable information rate is $r_i=g(p_i)=\dfrac{1}{2} \log(1+p_i)$ \cite{cover}. Moreover, in order to decode the message at time slot $i$ in the receiver, $q_i=\varphi (r_i)$ units of energy in receiver's battery is used for processing. $\varphi(r_i)$ is the decoding cost function which is ``convex, monotone increasing in the incoming rate'' \cite{31}.

Throughout the paper we assume that there is offline information about the harvested energy from the nature by the transmitter, receiver and helper at any time slot, i.e., the harvested energy values ${\{E_i\}}_{i=1}^N$, ${\{\bar E_i\}}_{i=1}^N$, and ${\{H_i\}}_{i=1}^N$ are known beforehand. In the paper, we study optimal power management over the three nodes to increase the reliable information rate. In the rest of paper, we consider the four scenarios. For all cases, it is assumed that the helper can partially store its harvested energy for the upcoming time slots.

\section{Problem Statements}
\subsection{EH transmitter and receiver with batteries}
The problem for this case is formulated as
\vspace{-0.2cm}
 \begin{subequations}
    \small{\begin{align}
      & \  \underset{\boldsymbol{p},\boldsymbol{\delta}}{\max} \hspace{0.3cm} \sum_{i=1}^N g(p_i) \\
      & \  \text{s.t.} \hspace{0.3cm}\sum_{i=1}^j p_i\leq \sum_{i=1}^j E_i,\hspace{0.5cm} \forall j \\
      & \  \sum_{i=1}^j \varphi(g(p_i))\leq \sum_{i=1}^j \bar E_i + \alpha \delta_i,\hspace{0.5cm} \forall j \\
       & \  \sum_{i=1}^j \delta_i\leq \sum_{i=1}^j H_i \hspace{0.01cm},  \hspace{0.5cm} \forall j
    \end{align}}
  \end{subequations}
 In this problem (2b), (2c), and (2d) denote the transmitter and receiver and the helper energy causality constraints, respectively. It can be intuitively understood that transferring helper's energy instantaneously to the receiver is optimal, since the receiver has a battery to store the energy. We prove the claim in the following. Let define a new constraint $\sum_{i=1}^j \varphi(g(p_i))\leq \sum_{i=1}^j \bar E_i + \alpha H_i,\forall j$ as (2c$^\prime$).
\begin{lemma}
If $p_i^*$ and $\tilde{p}_i^*$ denote the optimum powers for (2,a-d) and for the new problem (2,a,b,c$^\prime$) respectively, then $p_i^*=\tilde{p}_i^*, \forall i$.
\end{lemma}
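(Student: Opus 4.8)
The plan is to exploit that the objective $\sum_{i=1}^N g(p_i)$ depends only on the power vector $\boldsymbol{p}$ and not on the helper's transfer schedule $\boldsymbol{\delta}$; indeed $\boldsymbol{\delta}$ enters the program (2,a--d) solely through the receiver causality constraint (2c), whose right-hand side I read cumulatively as $\sum_{i=1}^{j}\bar E_i+\alpha\sum_{i=1}^{j}\delta_i$ (consistent with the way (2c$^\prime$) is written). Because $\alpha\ge 0$, this right-hand side is nondecreasing in each partial sum $\sum_{i=1}^{j}\delta_i$, so for any fixed $\boldsymbol{p}$ the most favorable choice of $\boldsymbol{\delta}$ is the one that makes every partial sum as large as the helper constraint (2d) permits. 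I will show that the immediate-transfer rule $\delta_i=H_i$ does exactly this for all $j$ at once, turning (2c) into precisely (2c$^\prime$), and then conclude that (2,a--d) and (2,a,b,c$^\prime$) have the same admissible set of power vectors and hence the same optimizer.

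The argument splits into two inclusions. \emph{Forward:} let $(\boldsymbol{p},\boldsymbol{\delta})$ be feasible for (2,a--d). Then (2d) gives $\sum_{i=1}^{j}\delta_i\le\sum_{i=1}^{j}H_i$ for every $j$; multiplying by $\alpha\ge 0$ and adding $\sum_{i=1}^{j}\bar E_i$ yields $\sum_{i=1}^{j}\bar E_i+\alpha\sum_{i=1}^{j}\delta_i\le\sum_{i=1}^{j}\bar E_i+\alpha\sum_{i=1}^{j}H_i$. Chaining this bound after (2c) shows $\boldsymbol{p}$ already satisfies (2c$^\prime$); since (2b) is untouched, $\boldsymbol{p}$ is feasible for (2,a,b,c$^\prime$). \emph{Reverse:} let $\boldsymbol{p}$ be feasible for (2,a,b,c$^\prime$) and set $\delta_i=H_i$. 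Then (2d) holds with equality at every prefix, and substituting $\sum_{i=1}^{j}\delta_i=\sum_{i=1}^{j}H_i$ into (2c) reproduces (2c$^\prime$), which $\boldsymbol{p}$ satisfies by hypothesis; hence $(\boldsymbol{p},\boldsymbol{H})$ is feasible for (2,a--d). The two inclusions show the projections onto $\boldsymbol{p}$ of the two feasible sets coincide.

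Since the objective is a function of $\boldsymbol{p}$ alone and the two admissible $\boldsymbol{p}$-sets are identical, the two programs share the same optimal value and the same set of maximizing power vectors; in particular the claimed identity $p_i^*=\tilde{p}_i^*$ holds, and $(\tilde{\boldsymbol{p}}^*,\boldsymbol{H})$ is an optimizer of (2,a--d), i.e. immediate transfer is optimal. Note that this conclusion needs no convexity of the set carved out by (2c$^\prime$) (which, since $\varphi\circ g$ need not be convex, is not automatic); the set-equality is what does the work.

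The step I expect to carry the real content is the reverse inclusion, specifically the observation that one and the same schedule $\delta_i=H_i$ simultaneously maximizes all $N$ partial sums $\sum_{i=1}^{j}\delta_i$. It is this simultaneity---that greedily forwarding energy as it arrives saturates the helper causality bound at every time index at once, so that there is no trade-off between relaxing (2c) at an early $j$ versus a late $j$---that makes ``transfer immediately'' optimal and lets the schedule-dependent (2c) be replaced by the schedule-free (2c$^\prime$). Everything else is monotonicity of the right-hand side in $\alpha\ge 0$ together with bookkeeping of prefix sums.
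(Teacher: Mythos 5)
Your proof is correct and follows essentially the same route as the paper: the forward direction chains (2c) with (2d) to obtain (2c$^\prime$), and the reverse direction uses the feasible choice $\delta_i=H_i$, exactly as the paper does. The only difference is your final step — you observe that the two feasible sets projected onto $\boldsymbol{p}$ coincide outright, which yields the identity of the optimizers directly, whereas the paper first equates the optimal values and then invokes convexity and uniqueness of the optimal policy to conclude $p_i^*=\tilde{p}_i^*$; your framing is marginally cleaner since it sidesteps that appeal.
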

\begin{proof} Considering (2c) and (2d), we have $\sum_{i=1}^j \varphi(g(p_i))\leq \sum_{i=1}^j \bar E_i + \alpha \delta_i \leq \sum_{i=1}^j \bar E_i + \alpha H_i,\forall j$. Now , it can be argued that (2a,b,c$^\prime$,d) is the same as (2) but it is extended in receiver energy causality constraint. Moreover, we reduce (2a,b,c$^\prime$,d) to (2a,b,c$^\prime$) which is optimization problem independent from $\delta_i$. These two actions extend answer set and as a result, $\sum_{i=1}^N g(p_i^*) \leq \sum_{i=1}^N g(\tilde{p}_i^*)$. On the other hand, considering $\delta_j=H_j,\forall j$ as the special case, is feasible for (2) and it changes the problem to (2a,b,c$^\prime$). Therefore, $\sum_{i=1}^N g(\tilde{p}_i^*) \leq \sum_{i=1}^N g({p}_i^*)$. Now, we conclude that $\sum_{i=1}^N g(\tilde{p}_i^*) = \sum_{i=1}^N g({p}_i^*)$. Since the power policy $\{\tilde p_i^*\}_{i=1}^N$ is feasible for (2) and holds the objective function equal to that for $\{ p_i^*\}_{i=1}^N$, it can be concluded that $p_i^*=\tilde{p}_i^*, \forall i$ because of (2) convexity and the uniqueness of the optimal policy.
\end{proof}
  From lemma 1, we can substitute constraints in formats (2c) and (2d) with (2c$^\prime$) in case of problem convexity and other constraints independency from $\delta_i$. By replacing $r_i=g(p_i)$ in (2), the problem can be rewritten as
  \vspace{-0.2cm}
\begin{subequations}
\small{\begin{align}
      & \  \underset{\boldsymbol{r}}{\text{max}} \hspace{0.3cm} \sum_{i=1}^N r_i \\
      & \  \text{s.t.} \hspace{0.3cm}\sum_{i=1}^j g^{-1}(r_i)\leq \sum_{i=1}^j E_i,\hspace{0.5cm} \forall j \\
      & \  \sum_{i=1}^j \varphi(r_i)\leq \sum_{i=1}^j \bar E_i + \alpha H_i,\hspace{0.5cm} \forall j
\end{align}}
\end{subequations}
where the objective function is linear function and $g^{-1}(.)$ and $\varphi(.)$ are both convex functions. Thus, (3) is a convex optimization problem and the optimal policy is derived by use of the conventional methods \cite{boyd}. Now it is clear that the optimization problem (3) is similar to the problem considered in \cite[eq (3)]{ulu}. If $\{r_i^*\}_{i=1}^N$ are the optimal solutions of (3), using the same approach as \cite[lemma 1-3, and Theorem 1]{ulu}, we have the following two lemmas and Theorem 1.
\begin{lemma} $r_i^*\leq r_{i+1}^*$ for $i=1,...,N-1$.
\end{lemma}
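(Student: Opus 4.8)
The plan is to prove monotonicity of the optimal rates by a standard exchange (interchange) argument, exploiting the convexity established for problem (3). Concretely, I would suppose for contradiction that there exists some index $i$ with $r_i^* > r_{i+1}^*$, and then construct a perturbed solution that attains a strictly larger (or at least equal) objective while remaining feasible, thereby contradicting optimality. The natural perturbation is to average the two offending rates: replace the pair $(r_i^*, r_{i+1}^*)$ by $(r_i^* - \Delta,\, r_{i+1}^* + \Delta)$ for a small $\Delta > 0$, leaving all other rates untouched. Since the objective in (3a) is the plain sum $\sum_i r_i$, this move leaves the objective value unchanged, so the real content is showing that the modified rate vector stays feasible, and that any strict improvement or the uniqueness argument then forces $r_i^* \le r_{i+1}^*$.

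The key feasibility checks are the two causality constraints. First I would examine the transmitter constraint (3b): because the partial sum $\sum_{i=1}^j g^{-1}(r_i)$ for $j < i$ is unaffected, and for $j \ge i+1$ the two shifted terms re-enter the sum together, only the single partial sum at index $j = i$ can change. Here convexity of $g^{-1}$ is the crucial ingredient: moving mass from the larger rate $r_i^*$ down toward the smaller $r_{i+1}^*$ decreases $g^{-1}(r_i) + g^{-1}(r_{i+1})$ by Jensen-type reasoning, so the intermediate partial sum at $j=i$ strictly decreases and the constraint is preserved. Exactly the same argument, using convexity of $\varphi$, handles the receiver constraint (3c). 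Thus equalizing the two rates can only relax both causality constraints.

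After establishing feasibility, I would close the argument using the structure already invoked in Lemma 1: problem (3) is convex with a unique optimizer, so a distinct feasible point achieving the same objective cannot exist. Since the averaged vector is feasible and attains the same objective sum, it must coincide with the optimal vector, which forces $r_i^* = r_{i+1}^*$ in the averaged configuration and contradicts the strict inequality $r_i^* > r_{i+1}^*$ we assumed. Hence $r_i^* \le r_{i+1}^*$ for all $i = 1,\dots,N-1$.

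The main obstacle I anticipate is handling the boundary carefully: the perturbation at index $i$ must not violate the causality constraint at the \emph{preceding} partial sums, and one must confirm that no intermediate constraint (at $j = i$) was previously tight in a way that the downward shift would over-correct. In practice the downward shift only helps the transmitter and receiver constraints at $j=i$, so tightness there is harmless; the genuinely delicate point is ensuring the rates remain in the valid domain of $g^{-1}$ and $\varphi$ (i.e., nonnegativity), which I would secure by choosing $\Delta$ small and invoking continuity. Since the excerpt explicitly defers to the analogous \cite[Lemma 1--3]{ulu}, I would lean on that reference to absorb these domain technicalities rather than re-deriving them.
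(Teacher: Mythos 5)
Your exchange argument is essentially the paper's (implicit) proof: the paper offers no proof of its own for this lemma and simply defers to \cite[Lemmas 1--3]{ulu}, where exactly this perturbation $(r_i^*,r_{i+1}^*)\mapsto(r_i^*-\Delta,r_{i+1}^*+\Delta)$ combined with convexity of $g^{-1}$ and $\varphi$ and uniqueness of the optimizer is used. One small bookkeeping slip: it is not true that only the partial sum at $j=i$ changes --- for $j\geq i+1$ the quantity $g^{-1}(r_i^*-\Delta)+g^{-1}(r_{i+1}^*+\Delta)$ differs from $g^{-1}(r_i^*)+g^{-1}(r_{i+1}^*)$ since $g^{-1}$ is nonlinear, but it \emph{decreases} by precisely the Jensen-type inequality you invoke (which is needed there, not at $j=i$, where monotonicity of $g^{-1}$ alone suffices), so feasibility is preserved and the argument goes through.
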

\begin{lemma} if $r_{k}^*<r_{k+1}^*$ for some $k$, at least one of the constraints, (3b) or (3c), is satisfied with equality at $j=k$.
\end{lemma}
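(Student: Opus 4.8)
The plan is to prove the contrapositive by a single-shot exchange (perturbation) argument, exactly in the spirit of the Yang--Ulukus lemmas cited in the text. Assume, for contradiction, that $r_k^*<r_{k+1}^*$ while \emph{both} constraints are strict at $j=k$, i.e.
\[
\sum_{i=1}^{k} g^{-1}(r_i^*)<\sum_{i=1}^{k} E_i
\qquad\text{and}\qquad
\sum_{i=1}^{k}\varphi(r_i^*)<\sum_{i=1}^{k}\bar E_i+\alpha H_i .
\]
I will exhibit a feasible rate vector with a strictly larger objective, which contradicts the optimality (and the uniqueness asserted in Lemma~1) of $\{r_i^*\}$ and therefore forces at least one of (3b), (3c) to hold with equality at $j=k$.

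The perturbation leaves $r_i^*$ untouched for $i\notin\{k,k+1\}$ and sets $\hat r_k=r_k^*+\delta_1$, $\hat r_{k+1}=r_{k+1}^*-\delta_2$ with small $\delta_1,\delta_2>0$. Feasibility is verified block by block. For $j<k$ the partial sums are unchanged. For $j=k$ only $r_k$ grows, so the left-hand sides of (3b) and (3c) increase by $g^{-1}(r_k^*+\delta_1)-g^{-1}(r_k^*)$ and $\varphi(r_k^*+\delta_1)-\varphi(r_k^*)$; the assumed strict slack absorbs both once $\delta_1$ is small. For $j\ge k+1$ both perturbed indices appear, so the net increment of the left-hand side of (3b) is $[g^{-1}(r_k^*+\delta_1)-g^{-1}(r_k^*)]-[g^{-1}(r_{k+1}^*)-g^{-1}(r_{k+1}^*-\delta_2)]$, and analogously for (3c) with $\varphi$. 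I will fix $\delta_2$ so that both increments are nonpositive, which preserves every constraint for $j\ge k+1$; for small perturbations $\hat r_k<\hat r_{k+1}$ still holds.

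The heart of the argument is that this compensating $\delta_2$ may be chosen \emph{strictly smaller} than $\delta_1$, so that the objective rises by $\delta_1-\delta_2>0$. This is where convexity is used: since $g^{-1}$ and $\varphi$ are convex and increasing, their increments over an interval are controlled by the slope at the right endpoint, and since $r_k^*<r_{k+1}^*$ the marginal cost of rate at slot $k$ is no larger than at slot $k+1$. Hence a unit of rate added at $k$ consumes no more transmit energy than a unit removed at $k+1$ releases, and likewise for the decoding energy; the minimal budget-neutral decrement satisfies $\delta_2<\delta_1$ as soon as one of the two costs is \emph{strictly} convex on $[r_k^*,r_{k+1}^*]$, which holds for the transmit side because $g^{-1}(r)=e^{2r}-1$ is strictly convex. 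Taking $\delta_2$ equal to the larger of the two minimal compensating decrements keeps $\delta_2<\delta_1$ and yields the contradiction.

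I expect the main obstacle to be the decoding-cost constraint (3c): the hypotheses only state that $\varphi$ is convex and monotone increasing, so it is a priori neither differentiable nor strictly convex. When the tight constraint beyond slot $k$ is of transmit type, strict convexity of $g^{-1}$ immediately delivers $\delta_2<\delta_1$ and a strict gain. The delicate case is when only decoding constraints bind for $j\ge k+1$ and $\varphi$ is affine on $[r_k^*,r_{k+1}^*]$: then the exchange is merely objective-neutral, but iterating it drives $\hat r_k$ and $\hat r_{k+1}$ to a common value while keeping the same objective, producing a second optimal policy and contradicting the uniqueness asserted in Lemma~1. Thus in every case $r_k^*<r_{k+1}^*$ is incompatible with simultaneous slack of (3b) and (3c) at $j=k$. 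Phrasing the slope comparisons through one-sided derivatives, or directly through the finite convexity inequalities above, removes the non-differentiability of $\varphi$, after which feasibility on each block together with the strict objective increase closes the argument.
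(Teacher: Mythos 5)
Your overall strategy is the right one, and it is in effect the only ``proof'' the paper offers: the text supplies no argument for this lemma beyond a pointer to the corresponding perturbation lemmas in the Arafa--Ulukus reference, and the exchange you set up (raise $r_k$, lower $r_{k+1}$, absorb the $j=k$ increase with the assumed slack, keep the $j\ge k+1$ prefix sums from growing) is exactly that argument. Your block-by-block feasibility bookkeeping is correct, and the convexity inequality you need --- that for a convex increasing $f$ the increment over $[r_k^*,r_k^*+\delta]$ is at most the increment over $[r_{k+1}^*-\delta,r_{k+1}^*]$ once $r_k^*+\delta\le r_{k+1}^*$ --- is the standard finite form and, as you note, sidesteps any differentiability assumption on $\varphi$.

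Two points need repair. First, the claim that $\delta_2<\delta_1$ holds ``as soon as one of the two costs is strictly convex'' is inconsistent with your own choice of $\delta_2$ as the \emph{larger} of the two minimal compensating decrements: $\delta_2$ must satisfy both budget inequalities simultaneously, so you need \emph{both} minimal decrements to fall strictly below $\delta_1$; strict convexity of $g^{-1}$ buys you nothing when $\varphi$ is affine on the interval, since then the $\varphi$-budget alone forces $\delta_2\ge\delta_1$. You recognize this a few lines later, so the slip is cosmetic, but as written the two sentences contradict each other. Second, and more substantively, your fallback in the affine case rests on uniqueness of the optimizer, which the paper asserts inside the proof of Lemma~1 but never establishes; the objective of (3) is linear over a convex feasible set, so uniqueness is not automatic, and it genuinely fails exactly in the regime you are worried about (take $N=2$, an unconstrained transmitter, $\varphi(r)=cr$, and all decoding energy arriving in slot~1: every split of the total rate is optimal, and splits with $r_1^*<r_2^*$ leave both constraints slack at $j=1$, so the lemma itself is false for some optimizers). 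In that degenerate regime the statement only holds for a suitably selected optimal solution, or under strict convexity of $\varphi$. This defect is inherited from the paper rather than introduced by you, but your proof should not present the uniqueness step as already secured.
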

\begin{theorem} The optimal rates are
\vspace{-0.2cm}
\begin{equation}
\begin{aligned}
      & \  r_n^*= \min\{ g(\dfrac {\sum_{j=1}^{i_n}E_j-\sum_{j=1}^{i_{n-1}} g^{-1}(r_j^*)}{i_n-i_{n-1}}),\\
      & \ \varphi^{-1}(\dfrac {\sum_{j=1}^{i_n} \bar E_j+\alpha H_j-\sum_{j=1}^{i_{n-1}} \varphi(r_j^*)}{i_n-i_{n-1}})\},\\
\end{aligned}
\end{equation}
where $i_0=0$, and
\vspace{-0.2cm}
\begin{equation}
\begin{aligned}
      & \  i_n= \underset{i_{n-1}<i\leq N}{\operatorname{argmin}}\{ g(\dfrac {\sum_{j=1}^{i_n}E_j-\sum_{j=1}^{i_{n-1}} g^{-1}(r_j^*)}{i_n-i_{n-1}}),\\
      & \ \varphi^{-1}(\dfrac {\sum_{j=1}^{i_n} \bar E_j+\alpha H_j-\sum_{j=1}^{i_{n-1}} \varphi(r_j^*)}{i_n-i_{n-1}})\}.\\
\end{aligned}
\end{equation}
\end{theorem}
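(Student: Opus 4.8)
The plan is to prove the theorem by induction on the epochs $(i_{n-1},i_n]$, constructing the optimal rate profile greedily from the first slot onward and invoking Lemmas~2 and 3 at each step. Problem (3) differs from \cite[eq.~(3)]{ulu} only in that the receiver's cumulative budget $\sum_{j=1}^{i}\bar E_j$ is replaced by the causal, additively modified sequence $\sum_{j=1}^{i}(\bar E_j+\alpha H_j)$; since this leaves the prefix-sum structure of the constraint intact, the whole argument of \cite{ulu} transfers, and the only new bookkeeping is to carry the second constraint through.

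First I would set up the inductive hypothesis that the first $i_{n-1}$ optimal rates are already fixed and form a non-decreasing staircase binding at least one of (3b), (3c) at each previous breakpoint $i_1<\cdots<i_{n-1}$. For the $n$-th epoch I would argue the optimal rate is constant on $(i_{n-1},i_n]$: by Lemma~2 the rates are non-decreasing, and by Lemma~3 a strict increase can occur only at an index where (3b) or (3c) is tight, so the next breakpoint is exactly the next slot at which one of the two causality constraints becomes active.

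Next I would pin down the common rate on the epoch. Writing $p_i=g^{-1}(r_i)$ and $q_i=\varphi(r_i)$, the slice of (3b), (3c) over the epoch reads $\sum_j p_j\le\sum_j E_j$ and $\sum_j q_j\le\sum_j(\bar E_j+\alpha H_j)$. Because $g$ and $\varphi^{-1}$ are concave, Jensen's inequality shows that for a fixed energy budget on the epoch the sum $\sum_j r_j$ is maximized by equal rates; hence the optimal rate equals the smaller of the two average feasible rates allowed by the transmitter and receiver budgets, which is precisely the $\min\{\cdot,\cdot\}$ in (4). The endpoint $i_n$ is the index in $(i_{n-1},N]$ minimizing this average (equivalently, the first index at which a constraint binds), giving (5); I would check that any other endpoint either violates causality at an intermediate slot (average taken too far) or leaves the constraint slack, contradicting Lemma~3.

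The main obstacle is establishing optimality, not merely feasibility, of this greedy construction, i.e., that no feasible profile attains a strictly larger $\sum_i r_i$. I would resolve this with an exchange argument: given Lemmas~2 and 3 any optimizer is a non-decreasing staircase binding a constraint at each breakpoint, so if its first breakpoint differed from the minimizing index $i_1$ in (5), transferring an infinitesimal amount of rate between the first two epochs and using the strict concavity of $g$ and $\varphi^{-1}$ (strict because $g^{-1},\varphi$ are strictly convex) would strictly increase the objective unless the breakpoint is exactly $i_1$ and the rate is exactly (4). The joint presence of the two constraints only forces one to track which average attains the minimum at each candidate endpoint; since the modification from \cite{ulu} is the purely additive, causal substitution $\bar E_j\mapsto\bar E_j+\alpha H_j$, this bookkeeping is identical to \cite[Theorem~1]{ulu} and introduces no new difficulty, while uniqueness of the optimizer (already used in Lemma~1) fixes the staircase.
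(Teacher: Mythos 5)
Your proposal is correct and takes essentially the same route as the paper: the paper proves Theorem~1 only by observing that (3) is the problem of \cite[eq.~(3)]{ulu} with $\bar E_j$ replaced by $\bar E_j+\alpha H_j$ and importing \cite[Lemmas 1--3, Theorem 1]{ulu}, which is exactly the reduction you identify before filling in the staircase/Jensen/exchange details that the paper leaves to the citation.
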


\subsection{Full power transmitter and non-battery receiver}
 In this scenario, there is no limitation on transmitter's power consumption and it may be better for the helper to store its harvested energy and transfer to the receiver later. Intuitively speaking, this problem reduces to a problem with \emph{virtual} receiver. That is, helper's energy is transferred to the receiver instantaneously and we take waterfilling algorithm on total receiver energy taking into account the receiver energy saving inability constraint. If $\{q_i\}_{i=1}^N$ denotes receiver power consumption, the problem is
 \vspace{-0.2cm}
\begin{subequations}
\small{\begin{align}
      & \  \underset{\boldsymbol{q},\boldsymbol{\delta}}{\max} \hspace{0.3cm} \sum_{i=1}^N \varphi^{-1}(q_i) \\
      & \  \text{s.t.}\hspace{0.3cm} q_i\geq \bar E_i,  \hspace{0.5cm} \forall i\\
      & \  \sum_{i=1}^j q_i\leq \sum_{i=1}^j\bar E_ i+ \alpha \delta_i, \hspace{0.5cm} \forall j \\
       & \  \sum_{i=1}^j \delta_i\leq \sum_{i=1}^j H_i,  \hspace{0.5cm} \forall j
\end{align}}
\end{subequations}
where (6b) is the receiver energy saving inability constraint. Moreover, (6c) and (6d) refer to the receiver and helper energy causality constraints, respectively. It is worth mentioning that saving energy occurs at helper only. Using Lemma 1, one can replace (6c) and (6d) by $\sum_{i=1}^j q_i\leq \sum_{i=1}^j\bar E_ i+ \alpha H_i$ for all $j$, which is named (6c$^\prime$). We call (6c$^\prime$) the virtual receiver energy causality constraint which assume that the receiver is virtually initiated with $\bar E_i+\alpha H_i$ amount of energy at $i^{th}$ slot and we use it to find the optimal power transfer policy$ \{\delta_i^*\}_{i=1}^N$. Therefore (6a-d) can be reformulated as (6a,b,c$^\prime$). If $\{q_i^{*}\}_{i=1}^{N}$ denotes the optimal power policy for (6), we characterize the optimal solution (6) in the three following lemmas.

\begin{lemma} If there exist two time slots $m$ and $n$ where $m<n$, such that $q_n^*<q_m^*$, then $q_m^*=\bar E_m$.
\end{lemma}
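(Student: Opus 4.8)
The plan is to prove this by contradiction, using a forward energy-transfer perturbation that leverages the concavity of the objective together with the one-sided structure of the two remaining constraints. By Lemma 1 it suffices to reason about the reduced problem (6a,b,c$'$), whose constraints are the pointwise lower bound (6b), $q_i\ge\bar E_i$, and the cumulative upper bound (6c$'$), $\sum_{i=1}^j q_i\le\sum_{i=1}^j(\bar E_i+\alpha H_i)$. So I assume, for contradiction, that there are slots $m<n$ with $q_n^*<q_m^*$ while $q_m^*>\bar E_m$, and I exhibit a feasible policy with strictly larger objective.

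First I would define the perturbed policy by $\hat q_i=q_i^*$ for $i\notin\{m,n\}$, $\hat q_m=q_m^*-\epsilon$, and $\hat q_n=q_n^*+\epsilon$, choosing $\epsilon>0$ with $\epsilon<\min\{q_m^*-\bar E_m,\,(q_m^*-q_n^*)/2\}$. The first bound keeps $\hat q_m\ge\bar E_m$, and since $q_n^*$ is only raised, (6b) holds at every slot; the second bound ensures $q_n^*<\hat q_n<\hat q_m<q_m^*$ so the two values do not cross.

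Next I would verify (6c$'$). Since energy is moved from the earlier slot $m$ to the later slot $n$, each partial sum $\sum_{i=1}^j\hat q_i$ is unchanged for $j<m$, is reduced by $\epsilon$ for $m\le j<n$, and is again unchanged for $j\ge n$ (the $-\epsilon$ and $+\epsilon$ cancel). Hence no partial sum increases, so every instance of (6c$'$) remains satisfied and $\{\hat q_i\}$ is feasible. This monotone behavior is precisely why shifting energy \emph{forward} in time can never violate causality, and I regard this bookkeeping as the main thing to get right.

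Finally I would show the objective strictly increases. Because $\varphi$ is convex and monotone increasing, $\varphi^{-1}$ is strictly concave and increasing, and since the interval $[q_n^*,\hat q_n]$ lies entirely to the left of $[\hat q_m,q_m^*]$, strict concavity gives $\varphi^{-1}(q_n^*+\epsilon)-\varphi^{-1}(q_n^*)>\varphi^{-1}(q_m^*)-\varphi^{-1}(q_m^*-\epsilon)$. Rearranging, $\varphi^{-1}(\hat q_m)+\varphi^{-1}(\hat q_n)>\varphi^{-1}(q_m^*)+\varphi^{-1}(q_n^*)$, so $\sum_i\varphi^{-1}(\hat q_i)>\sum_i\varphi^{-1}(q_i^*)$, contradicting optimality; I therefore conclude $q_m^*=\bar E_m$. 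The only assumption I need to flag is \emph{strict} convexity of $\varphi$ (hence strict concavity of $\varphi^{-1}$); without it the transfer yields only a weak improvement and the strict conclusion could fail.
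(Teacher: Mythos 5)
Your proof is correct and follows essentially the same route as the paper: assume $q_m^*>\bar E_m$, shift a small $\epsilon$ of energy forward from slot $m$ to slot $n$, and invoke concavity of $\varphi^{-1}$ to contradict optimality. You simply spell out the feasibility bookkeeping and the strict-concavity caveat that the paper leaves implicit.
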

\begin{proof} Assuming that $q_m^*> \bar E_m$, there is sufficiently small amount of energy, $\epsilon$, which can be saved at $m^{th}$ slot without violating (6b) and to be used at $n^{th}$ time slot. Then, we have
\begin{equation}
\begin{aligned}
      & \  \varphi^{-1}(q_m^*-\epsilon)+\varphi^{-1}(q_n^*+\epsilon)>\  \varphi^{-1}(q_m^*)+\varphi^{-1}(q_n^*). \\
\end{aligned}
\end{equation}
Since $\varphi^{-1}$(.) is a concave function, the left side of (7) leads to larger data transmission which contradicts the problem solution optimality. Thus, $q_m^*=\bar E_m$.
\end{proof}
 We assume that $\{\hat q_i\}_{i=1}^N$ denotes the optimal power policy for problem \{(6a),(6c$^\prime$)\}. $\{\hat q_i\}_{i=1}^N$ can be calculated by forward waterfilling algorithm. Then, we have lemma 5 and 6.
\begin{lemma} If $\hat q_k<\bar E_k$ for some $k$, then $q_k^*=\bar E_k$
\end{lemma}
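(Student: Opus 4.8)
The plan is to argue by contradiction, combining the floor constraint (6b) with the monotonicity from Lemma 4 and the optimality of the relaxed solution $\{\hat q_i\}$. Feasibility under (6b) already forces $q_k^*\ge\bar E_k$, so it suffices to rule out $q_k^*>\bar E_k$. Suppose then that $q_k^*>\bar E_k$. Since $\hat q$ solves the relaxed problem \{(6a),(6c$^\prime$)\}, which drops the floor (6b), and since the forward-waterfilling solution is non-decreasing, the hypothesis $\hat q_k<\bar E_k$ propagates backward to $\hat q_i\le\hat q_k<\bar E_k$ for every $i\le k$. Thus at the relaxed optimum all slots up to $k$ sit strictly below the floor $\bar E_k$, whereas the full-problem optimum is forced strictly above it at slot $k$.

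First I would extract structural information about $q^*$ from Lemma 4. Its contrapositive states that if $q_m^*>\bar E_m$ then $q_n^*\ge q_m^*$ for all $n>m$; applied at $m=k$ this gives $q_n^*\ge q_k^*>\bar E_k$ for every $n\ge k$. Hence, relative to the relaxed optimum, $q^*$ over-spends on the block $\{k,\dots,N\}$, and because total usable energy is capped by (6c$^\prime$) this surplus must be compensated by a deficit on some earlier slot $m<k$. The goal is to exhibit such an $m$ with $q_m^*<q_k^*$ together with strict slack $\sum_{i\le j}q_i^*<\sum_{i\le j}(\bar E_i+\alpha H_i)$ for every $j\in\{m,\dots,k-1\}$, because then I can transfer a small amount $\epsilon$ of energy from slot $k$ back to slot $m$: reducing $q_k^*$ by $\epsilon$ keeps it $\ge\bar E_k$ and only loosens the causality caps for $j\ge k$, while raising $q_m^*$ by $\epsilon$ stays within the slack. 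By strict concavity of $\varphi^{-1}(\cdot)$, this backward transfer from the higher value $q_k^*$ to the lower value $q_m^*$ strictly increases $\sum_i\varphi^{-1}(q_i)$, contradicting optimality of $q^*$ and forcing $q_k^*=\bar E_k$.

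The main obstacle is precisely locating the earlier slot $m$ and certifying the required causality slack; this is where the hypothesis $\hat q_k<\bar E_k$ must do its work. I would obtain it from the waterfilling structure of $\hat q$: let $[a,b]$ be the maximal flat interval of $\hat q$ containing $k$, on which $\hat q_i\equiv w=\hat q_k<\bar E_k$, so that the cap (6c$^\prime$) is slack strictly inside $[a,b]$ and binds only at its right endpoint. Since $q^*$ respects the floor and $q_k^*>\bar E_k>w$, comparing the cumulative expenditures of $q^*$ and $\hat q$ across this interval shows that $q^*$ cannot hold slots $a,\dots,k$ at a common level as high as $q_k^*$ without violating a cap that $\hat q$ leaves slack; this yields an index $m\in\{a,\dots,k-1\}$ with $q_m^*<q_k^*$ and the intervening caps non-binding. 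Converting this cumulative-accounting step into the clean existence of $m$ is the delicate part; an alternative that sidesteps it is a KKT comparison, writing $(\varphi^{-1})'(q_k^*)+\mu_k=\sum_{j\ge k}\lambda_j$ with complementary slackness $\mu_k(q_k^*-\bar E_k)=0$, and deriving a contradiction between the full- and relaxed-problem water levels at slot $k$ once $q_k^*>\bar E_k$ forces $\mu_k=0$.
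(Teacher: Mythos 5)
Your route is genuinely different from the paper's. The paper argues in one step: $\hat q_k<\bar E_k$ means the forward waterfilling for \{(6a),(6c$^\prime$)\} wants to carry $\bar E_k+\alpha H_k-\hat q_k>\alpha H_k$ units of energy past slot $k$, but only the helper can store energy, so at most $\alpha H_k$ can be deferred there; ``by convexity'' the constrained optimum then defers exactly $\alpha H_k$, i.e.\ $q_k^*=\bar E_k$. That is the generic principle that a convex constraint violated by the relaxed optimum is active at the constrained optimum, asserted without the perturbation details. Your proof replaces this with an explicit exchange argument built on Lemma 4 and the monotonicity of forward waterfilling; it is more elementary and, once completed, arguably tighter than the paper's one-liner, which glosses over the possibility that several floor constraints are violated by $\hat q$ simultaneously.

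The step you flag as delicate --- producing $m<k$ with $q_m^*<q_k^*$ and slack in every causality cap between $m$ and $k-1$ --- is a real gap as written, but it closes cleanly. Write $C_j=\sum_{i\le j}(\bar E_i+\alpha H_i)$, let $j_0<k$ be the largest index with $\sum_{i\le j_0}q_i^*=C_{j_0}$ (set $j_0=0$ if none), and let $b\ge k$ be the right endpoint of the flat interval of $\hat q$ containing $k$, so that $\hat q_i\le \hat q_k=w<\bar E_k$ for all $i\le b$ and $\sum_{i\le b}\hat q_i=C_b$ (the cap is tight where the water level next rises, or at $N$). If no admissible $m$ existed in $\{j_0+1,\dots,k-1\}$, then $q_i^*\ge q_k^*>\bar E_k>w\ge \hat q_i$ for every $i\in\{j_0+1,\dots,b\}$ (below $k$ by the failure of the existence claim, at and above $k$ by your appeal to Lemma 4), whence $\sum_{i=j_0+1}^{b}q_i^*>\sum_{i=j_0+1}^{b}\hat q_i=C_b-\sum_{i\le j_0}\hat q_i\ge C_b-C_{j_0}$ and therefore $\sum_{i\le b}q_i^*>C_b$, violating (6c$^\prime$). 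Note the constraint doing the work is the one $\hat q$ makes \emph{tight} at $b$, not one it leaves slack, so your heuristic description of the accounting has the roles reversed; with this substitution your argument is complete.
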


\begin{proof} Considering the  waterfiling algorithm for \{(6a),(6c$^\prime$)\}, the term $\hat q_k<\bar E_k$ means that to derive the optimum policy for  \{(6a),(6c$^\prime$)\}, the virtual receiver saves $\bar E_k+\alpha H_k-\hat q_k$ amount of energy at $k^{th}$ slot which is larger than $\alpha H_k$. Since the virtual receiver can save at most $\alpha H_k$ amount of energy at $k^{th}$ slot, the amount of energy to be saved would be exactly $\alpha H_k$ because of the problem (6) convexity. Hence, $q_k^*=\bar E_k$.
\end{proof}
Then, it can be shown that ignoring the $k^{th}$ slot at which $\hat p_k<\bar E_k$ from waterfilling will result in non-increasing water level in whole time slots except the $k^{th}$ slot.
\begin{lemma} If $\hat q_k<\bar E_k$ for some $k$ and $\{\tilde q_i\}_{i=1}^N$ denotes optimal policy at (8), then  $\tilde q_i \leq \hat q_i$ for $1\leq i \leq N$ and $i \neq k$.
\begin{subequations}
\small{\begin{align}
      & \  \underset{\boldsymbol{q},\boldsymbol{\delta}}{\max} \hspace{0.3cm} \sum_{i=1,i\neq k}^N \varphi^{-1}(q_i) \\
      & \  \text{s.t.}\sum_{i=1,i\neq k}^j q_i\leq \sum_{i=1,i\neq k}^j\bar E_ i+ \alpha \delta_i, \hspace{0.5cm} \forall j \\
       & \  \sum_{i=1}^j \delta_i\leq \sum_{i=1}^j H_i,  \hspace{0.5cm} \forall j \\
        & \  \delta_k=0.
\end{align}}
\end{subequations}
\end{lemma}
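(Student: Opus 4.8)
The plan is to recast problem (8) as an instance of the already-solved virtual problem and then to invoke a monotonicity property of waterfilling. Write $A_j\eqdef\sum_{i=1}^{j}(\bar E_i+\alpha H_i)$, so that the virtual causality constraint (6c$^\prime$) reads $\sum_{i=1}^{j}q_i\le A_j$ and $\{\hat q_i\}$ is the forward-waterfilling solution of \{(6a),(6c$^\prime$)\}; recall that such a solution is non-decreasing and piecewise constant, its partial sums tracing the lower convex envelope of the points $(j,A_j)$. First I would rewrite the constraints of (8): since $\delta_k=0$ only forbids a transfer \emph{at} slot $k$ while $H_k$ stays available to later slots through (8c), the same instantaneous-transfer argument as in Lemma 1 lets me replace (8b)--(8d) by the single family $\sum_{i=1,i\neq k}^{j}q_i\le B_j$, where $B_j=A_j$ for $j<k$ and $B_j=A_j-\bar E_k$ for $j>k$. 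Hence $\{\tilde q_i\}$ is the waterfilling solution over the slots $\{1,\dots,N\}\setminus\{k\}$ with cumulative budgets $\{B_j\}$.

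The decisive observation is that these same budgets arise from the virtual problem by fixing slot $k$. Imposing $q_k=\bar E_k$ in \{(6a),(6c$^\prime$)\} turns the causality constraint for $j\ge k$ into $\sum_{i=1,i\neq k}^{j}q_i\le A_j-\bar E_k=B_j$ and leaves it unchanged for $j<k$, while $\varphi^{-1}(\bar E_k)$ drops out of the objective as a constant. Therefore $\{\tilde q_i\}_{i\neq k}$ is exactly the virtual optimum \emph{subject to} $q_k=\bar E_k$. On the other hand, restricting the joint optimum $\{\hat q_i\}$ to the slots $i\neq k$ is, by partial optimality of a joint maximizer, the optimum of the \emph{same} remaining-slot problem with the later budgets ($j\ge k$) lowered instead by $\hat q_k$. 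Thus $\{\hat q_i\}_{i\neq k}$ and $\{\tilde q_i\}$ are waterfilling solutions of one and the same $(N-1)$-slot problem whose suffix budgets are reduced by $\hat q_k$ and by $\bar E_k$, respectively; because $\hat q_k<\bar E_k$, the policy $\{\tilde q_i\}$ faces a uniformly smaller budget on the entire suffix $j\ge k$.

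It then remains to establish the monotonicity sub-lemma: \emph{lowering every cumulative budget on a suffix by a common constant $d\ge 0$, while leaving the earlier budgets untouched, makes each waterfilling allocation non-increasing in $d$}. Granting this with $d$ running from $\hat q_k$ to $\bar E_k$ yields $\tilde q_i\le\hat q_i$ for all $i\neq k$, which is the claim. I would prove the sub-lemma from the epoch structure of the optimum: on any maximal interval where the solution is flat, the level equals the average budget increment over that interval; an interval lying entirely before or entirely after the cut-off has a level independent of $d$, whereas an interval straddling the cut-off has level decreasing linearly in $d$. Since the optimal policy is continuous in $d$ and its one-sided derivatives are nowhere positive, each allocation is non-increasing in $d$. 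I expect this sub-lemma to be the main obstacle, because a reduction of an \emph{arbitrary} prefix budget need \emph{not} be pointwise monotone—shrinking a single early budget can force a later allocation \emph{up}—so the argument must lean on the fact that fixing $q_k$ at a constant subtracts the \emph{same} amount from \emph{all} later prefixes; it is precisely this uniformity, together with $\hat q_k<\bar E_k$, that drives the remaining water levels down.
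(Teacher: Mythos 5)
Your proposal is correct and follows essentially the same route as the paper: both arguments recast $\{\hat q_i\}_{i\neq k}$ and $\{\tilde q_i\}$ as solutions of the same $(N-1)$-slot waterfilling problem whose cumulative budgets from slot $k+1$ onward differ by the surplus $\bar E_k-\hat q_k>0$ (the paper encodes this as extra helper energy $H'_{k+1}-H''_{k+1}=\frac{\bar E_k-\hat q_k}{\alpha}$, you as a uniform suffix reduction of the $A_j$), and then conclude by monotonicity of the water levels in that suffix budget. Your treatment is slightly more explicit than the paper's in two places it glosses over --- the partial-optimality identification of $\{\hat q_i\}_{i\neq k}$ and the epoch-structure proof of the monotonicity sub-lemma, which the paper simply asserts --- but the underlying argument is the same.
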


\begin{proof} The optimal solution for \{(6a),(6c$^\prime$)\} is equivalent to problem (9) as

\vspace{-0.3cm}
\begin{subequations}
\small{\begin{align}
      & \  \underset{\boldsymbol{q}}{\max} \hspace{0.3cm} \sum_{i=1,i\neq k}^N \varphi^{-1}(q_i) \\
      & \  \text{s.t.}\sum_{i=1,i\neq k}^j q_i\leq \sum_{i=1,i\neq k}^j\bar E_ i+ \alpha H'_i, \hspace{0.5cm} \forall j
\end{align}}
\end{subequations}
where $\{H'_i=H_i\}_{i=1,i\neq k,k+1}^N$ and $H'_{k+1}=H_{k+1}+$$H_{k}+\dfrac{\bar E_k}{\alpha}-\dfrac{\hat q_k}{\alpha}$. While using lemma 1, (8) is equivalent to \{(9a),(9b$^\prime$)\} where (9b$^\prime$) is $\sum_{i=1,i\neq k}^j q_i\leq \sum_{i=1,i\neq k}^j\bar E_ i+ \alpha H''_i$ for all $j$ in which $\{H''_i=H_i\}_{i=1,i\neq k,k+1}^N$ and $H''_{k+1}=H_{k+1}+H_{k}$.
Since $\hat q_k<\bar E_k$, it can be easily concluded that $\{H'_i=H''_i\}_{i=1,i\neq k,k+1}^N$ and $H'_{k+1}> H''_{k+1}$ meaning that both are the same ordinary waterfilling problem through whole time slots except $k^{th}$ time slot,  but more power available at $k+1^{th}$ slot at (9). Hence, the optimum power for \{(9a),(9b$^\prime$)\} will never be larger than (9) at any remaining slot. Equivalently, the optimal power policy for (8) will never be larger than the optimal power policy for \{(6a),(6c$^\prime$)\} i.e. $\{\tilde q_i \leq \hat q_i\}_{i=1,i\neq k}^N$.
\end{proof}

 Lemma 5 and 6 expresses that if $\hat q_k < \bar E_k$, we can save whole energy at $k^{th}$ slot at helper. This omitting reduces water level and never contradicts our decision about omitting that slot. This reduction in water level may cause some new slot $l$ to become less than $\bar E_l$. Hence, to calculate $\{q_i^*\}_{i=1}^N$, we propose iterative steps as presented in algorithm 1, with generally a few iteration. Benefited from this parameter separation, we have
 \begin{equation}
\begin{aligned}\small
      \  q_i^*=\bar S_i, \text{ and }  \delta_i^*=\dfrac{q_i^*-\bar E_i}{\alpha}, \forall i.
\end{aligned}
\end{equation}

\subsection{Battery transmitter and non-battery EH receiver}
Here, due to incapability of the receiver's energy saving, the transmitter and helper energy saving policy should be performed to maximize the total data transmission rate. In this case also, we can assume that the helper has transferred it's harvested energy instantaneously to the receiver and apply waterfilling solution taking into account the receiver energy saving inability and the transmitter energy management.
\vspace{-0.15cm}
\begin{subequations}
\small{\begin{align}
      & \  \underset{\boldsymbol{r},\boldsymbol{\bar r},\boldsymbol{\delta}}{\max} \hspace{0.3cm} \sum_{i=1}^N r_i \\
      & \  \text{s.t.}\hspace{0.3cm}\sum_{i=1}^j g^{-1}(r_i)\leq \sum_{i=1}^j E_i,  \hspace{0.5cm}\forall j \\
      & \  \sum_{i=1}^j \varphi(\bar r_i)\leq \sum_{i=1}^j\bar E_ i+ \alpha \delta_i, \hspace{0.5cm}\forall j\\
      & \  \sum_{i=1}^j \delta_i\leq \sum_{i=1}^j H_i,  \hspace{0.5cm}\forall j \\
      & \  \varphi(\bar r_j)\geq \bar E_j,  \hspace{0.5cm}\forall j\\
      & \  r_j\leq \bar r_j. \hspace{0.5cm}\forall j
\end{align}}
\end{subequations}

\vspace{-0.5cm}
Any power saving policy at helper imposes a set of maximum constraints on transmitter power management at each time slot. (11e) denotes energy saving inability at receiver. From lemma 1, one can substitute (11c) and (11d) with $\sum_{i=1}^j \varphi(\bar r_i)\leq \sum_{i=1}^j\bar E_ i+ \alpha H_i, \forall j$ which is denoted by (11c$^\prime$). As before, we assumed that the receiver energy is initiated with $\bar E_ i+ \alpha H_i$ at $i^{th}$ slot which help us calculate optimal energy transfer policy $\{\delta_i^*\}_{i=1}^N$ using $\{\bar r_i^*,r_i^*\}_{i=1}^N$. To solve (11), we decompose the problem into outer and inner problem. We show the inner problem (11a,b,f) by the function $\{r_i\}_{i=1}^N=\textbf{R}_g(\{E_i\}_{i=1}^N,\{\bar r_i\}_{i=1}^N)$ which maximize $\sum_{i=1}^N r_i$. For fixed $\{\bar r_i\}_{i=1}^N$, the function $\textbf{R}_g$ is the waterfilling algorithm on  the first argument $\{E_i\}_{i=1}^N$, taking into account the maximum constrains $\{g^{-1}(\bar r_i)\}_{i=1}^N$ at each slot, where the optimal solution is considered in \cite{44} and we avoid representing the optimal algorithm. Moreover, we specify the outer problem on (11) as $\{\bar r_i\}_{i=1}^N=\textbf{F}_\varphi(\{ \bar E_i\}_{i=1}^N, \{H_i\}_{i=1}^N)$, where $\textbf{F}_\varphi$ is any feasible $\{\bar r_i\}_{i=1}^N$ under two constraints (11c$^\prime$,e). So, the optimal rates at (11) is given by

\begin{equation}
\begin{aligned}\small
      & \   \textbf{R}_g(\{E_i\}_{i=1}^N,\textbf{F}_\varphi(\{ \bar E_i\}_{i=1}^N, \{H_i\}_{i=1}^N)) \\
\end{aligned}
\end{equation}
 We denote the result for (12), by $\{\hat r_i\}_{i=1}^N$ when $\textbf{F}_\varphi$ is fixed as $\textbf{F}_\varphi^*(\{ \bar E_i\}_{i=1}^N, \{H_i\}_{i=1}^N)=\{\varphi^{-1}(\bar S_i)\}_{i=1}^N$, where $\{\bar S_i\}_{i=1}^N$ is derived from algorithm 1. Besides, $\{r_i^*\}_{i=1}^N$ denote the optimal policy at (11). Then, we have the following lemma.
\begin{lemma} $\hat r_1=r_1^*$.
\end{lemma}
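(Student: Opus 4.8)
The plan is to prove $\hat r_1=r_1^*$ by establishing the two inequalities $\hat r_1\le r_1^*$ and $\hat r_1\ge r_1^*$ separately, exploiting the structural fact that the first time slot is the only slot into which no energy can flow from the future (for either the transmitter, the receiver, or the helper), so the sole coupling to later slots is the option of \emph{saving} energy at slot $1$. I would first record that the decomposed policy is admissible for (11). Taking the outer variables $\bar r_i=\varphi^{-1}(\bar S_i)$ and the inner rates $\{\hat r_i\}=\textbf{R}_g(\{E_i\}_{i=1}^N,\{\varphi^{-1}(\bar S_i)\}_{i=1}^N)$, the outer part satisfies (11c$^\prime$) and (11e) because $\{\bar S_i\}_{i=1}^N$ is optimal for (6) and hence obeys (6b) and (6c$^\prime$), which become exactly (11e) and (11c$^\prime$) under the identification $\varphi(\bar r_i)=\bar S_i$; the inner part satisfies (11b) and (11f) by construction of the capped waterfilling operator $\textbf{R}_g$. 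Thus $(\{\hat r_i\},\{\varphi^{-1}(\bar S_i)\})$ is feasible for (11), which already yields $\sum_i \hat r_i\le \sum_i r_i^*$ but not yet the per-slot claim.

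Next I would isolate the slot-$1$ structure. Transmitter causality (11b) at $j=1$ forces $g^{-1}(r_1^*)\le E_1$, and receiver causality (11c$^\prime$) together with the helper constraint at $j=1$ forces $\varphi(\bar r_1^*)\le \bar E_1+\alpha H_1$; the same bounds pin $\hat r_1$ for the decomposed policy. By an exchange argument in the spirit of Lemma 2 (transmitter side) and Lemma 4 (receiver side) I would argue that an optimal profile can be taken so that slot $1$ carries the minimal sustainable rate, whence $r_1^*$ is the pointwise minimum of a transmitter-side first-epoch average term, call it $A$, and a receiver-side term. Correspondingly $\hat r_1=\min\{A,\varphi^{-1}(\bar S_1)\}$, so the entire lemma collapses to showing that $\varphi^{-1}(\bar S_1)$ is precisely the binding receiver rate for slot $1$ in the joint problem (11), rather than merely a feasible cap.

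For the two inequalities this reduction is used as follows. If the transmitter is the slot-$1$ bottleneck, the cap $\varphi^{-1}(\bar S_1)$ is inactive and $\hat r_1=A=r_1^*$ directly from transmitter causality. If the receiver is the bottleneck, I would show on one hand (the $\ge$ direction) that $\{\hat r_i\}$ with the full-power caps attains $\varphi^{-1}(\bar S_1)$ at slot $1$, and on the other hand (the $\le$ direction) that no admissible reallocation of helper energy toward slot $1$ can push $r_1^*$ above $\varphi^{-1}(\bar S_1)$: any additional energy delivered at slot $1$ must be removed from later slots, and since $\{\bar S_i\}$ already maximizes $\sum_i\varphi^{-1}(\bar S_i)$ for the concave $\varphi^{-1}(\cdot)$ under the nondecreasing structure of Lemma 4, such a perturbation cannot strictly raise the first-slot rate while keeping the profile feasible. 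I expect this last step to be the main obstacle, because the joint objective $\sum_i r_i$ counts the pointwise minimum of transmitter and receiver capabilities rather than the receiver throughput $\sum_i\varphi^{-1}(\bar S_i)$ that $\bar S$ optimizes; translating optimality of $\bar S$ for the receiver-only problem into optimality of $\bar S_1$ as the slot-$1$ cap for the coupled problem requires carefully combining the slot-$1$ causality bounds, the concavity exchange argument restricted to the first epoch, and the monotonicity of $\textbf{R}_g$ in its cap argument.
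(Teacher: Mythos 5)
Your setup is sound -- the feasibility of the decomposed policy, the observation that slot $1$ is the only slot that cannot receive energy from the future, and the reduction to showing that $\varphi^{-1}(\bar S_1)$ is the correct slot-$1$ receiver cap for the \emph{coupled} problem -- but the proof stops exactly where the lemma actually lives. You explicitly defer the crux (``I expect this last step to be the main obstacle\dots requires carefully combining\dots'') and never supply the argument that optimality of $\{\bar S_i\}$ for the receiver-only objective $\sum_i\varphi^{-1}(q_i)$ transfers to optimality of $\bar S_1$ as the slot-$1$ cap when the objective is $\sum_i r_i$ with $r_i$ limited by the transmitter as well. Your intermediate claim that $r_1^*=\min\{A,\text{receiver term}\}$ is also circular as stated: the ``binding receiver rate'' at slot $1$ depends on the optimal helper transfer schedule, which is precisely the unknown.

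The paper closes this gap with a perturbation argument you would need to reproduce. Suppose the optimal outer policy has $\bar r_1^*\neq\varphi^{-1}(\bar S_1)$; then (up to the symmetric case) there is a slot $n>1$ with $\bar r_1^*>\bar r_n^*$ and $\varphi(\bar r_1^*)>\bar E_1$, and one splits on whether the inner waterfilling saturates the slot-$1$ cap. If $r_1^*<\bar r_1^*$, shifting $\epsilon$ of helper energy from slot $1$ to slot $n$ enlarges the cap at $n$ without touching the rate at $1$, contradicting optimality. The genuinely delicate case is $r_1^*=\bar r_1^*$: there one first argues $r_n^*=\bar r_n^*$ (both caps binding, by concavity of $g$), and then perturbs \emph{both} the helper ($\epsilon$ units of receiver energy from slot $1$ to slot $n$) and the transmitter ($\epsilon'=g^{-1}(r_1^*)-g^{-1}(\varphi^{-1}(\varphi(r_1^*)-\epsilon))$ units from slot $1$ to slot $n$), so that whichever argument of $r'_n=\min\{g(g^{-1}(r_n^*)+\epsilon'),\varphi^{-1}(\varphi(\bar r_n^*)+\epsilon)\}$ is active, concavity of $g$ or of $\varphi^{-1}$ yields $r'_1+r'_n>r_1^*+r_n^*$, a strict improvement. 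This coupled two-variable exchange, exploiting concavity of both $g$ and $\varphi^{-1}$ simultaneously, is the idea missing from your proposal; without it the $\le$ direction of your plan (that no reallocation of helper energy can change the slot-$1$ rate) is unsupported.
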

\begin{proof} Algorithm 1 is the waterfilling algorithm to calculate $\{\bar S_i\}_{i=1}^N$, in terms of $\{\bar E_i\}_{i=1}^N$ and $\{H_i\}_{i=1}^N$ which waterfill helper energy in the way that non-battery receiver is enabled to decode maximum data with the assumption that transmitter is full power. We assume that $\textbf{F}_\varphi^*$ does not follow algorithm 1 for power management i.e. $\{\bar r_i^*\}_{i=1}^{N}=\textbf{F}_\varphi^*(\{ \bar E_i\}_{i=1}^N, \{H_i\}_{i=1}^N)\neq \{\varphi^{-1}(\bar S_i)\}_{i=1}^N$ and specifically $\bar r_1^*\neq \varphi^{-1}(\bar S_1)$. Then, one of two following assumption must happen. First, in contradiction to lemma 4, there must be some time slot $n>1$ such that $\bar r_1^*>\bar r_n^*$ and $\varphi(\bar r_1^*)>\bar E_1$. So, noting that $\{r_i^*\}_{i=1}^N=\textbf{R}_g(\{E_i\}_{i=1}^N, \{\bar r_i^*\}_{i=1}^N)$, we would have two cases.

 The first case of the first assumption is $r_1^*<\bar r_1^*$. So, there would be sufficiently small amount of energy $\epsilon$, which can be saved at first slot for $n^{th}$ slot at helper in the way that we could have new maximum limitation set i.e $\{\bar r'_i\}_{i=1}^N=\{\varphi^{-1}(\varphi(\bar r_1^*)-\epsilon),\varphi^{-1}(\varphi(\bar r_n^*)+\epsilon)\}\cup \{\bar r_i\}_{i=2,i\neq n}^N$ as the output for $\textbf{F}_\varphi^*$. Noting that $\varphi$(.) is an increasing function, the new set $\{\bar r'_i\}_{i=1}^N$ imposes larger maximum constraint at time slot $n$ at transmitter, without constraining transmitter rate at time slot $1$, which contradicts $\textbf{F}_\varphi^*$ optimality.

  Now, we consider the second case of the first assumption, $r_1^*=\bar r_1^*$. Then, it can be shown that $r_n^*=\bar r_n^*$ because of the function $\textbf{R}_g$ features in having two time slot maximum rate constraint $\bar r_1^*$ and $\bar r_n^*$ where $\bar r_1^*>\bar r_n^*$ because of $g$(.) concavity and we avoid presenting the complete proof. In this case, considering the new set $\{\bar r'_i\}_{i=1}^N$ as defined above again, let the transmitter  save $\epsilon'=g^{-1}(r_1^*)-g^{-1}(\varphi^{-1}(\varphi(r_1^*)-\epsilon))$ amount of energy at first slot for the $n^{th}$ slot. Defining $\{ r'_i\}_{i=1}^N=\textbf{R}_g(\{E_i\}_{i=1}^N,\{\bar r'_i\}_{i=1}^N)$, it can proven that $r'_1=\bar r'_1$ and for the $n^{th}$ slot, we have
 \begin{equation}
\begin{aligned}\small
      & \   r'_n=\text{min}\{g(g^{-1}(r_n^*)+\epsilon'),\varphi^{-1}(\varphi(\bar r_n^*)+\epsilon)\} \\
\end{aligned}
\end{equation}
 If $ r'_n$ is equal to the first argument of (13), it can be shown that $r'_1+ r'_n=g(g^{-1}(r_1^*)-\epsilon')+g(g^{-1}(r_n^*)+\epsilon')>r_1^*+\bar r_n^*$, because $g(.)$ is a concave function. Similarly, if $r'_n$ is equal to the second argument, noting that $r'_1=\bar r'_1$, it can be shown that $r'_1+ r'_n=\varphi^{-1}(\varphi(r_1^*)-\epsilon)+\varphi^{-1}(\varphi(r_n^*)+\epsilon)>r_1^*+\bar r_n^*$. Thus the new policy for $\textbf{F}_\varphi^*$ i.e. $\{\bar r'_i\}_{i=1}^N=\{\varphi^{-1}(\varphi(\bar r_1^*)-\epsilon),\varphi^{-1}(\varphi(\bar r_n^*)+\epsilon)\}\cup \{\bar r_i\}_{i=2,i\neq n}^N$ let the function $\textbf{R}_g$ obtain $\{r'_i\}_{i=1}^N$ where $\sum_{i=1}^N r'_i>\sum _{i=1}^N r_i^*$ which contradicts initial assumption for the optimality of $\textbf{F}_\varphi^*$.

 The second assumption is that for some slot $n>1$, $\bar r_1^*<\bar r_n^*$ and it is feasible to turn  some energy from $n^{th}$ time slot back to the first time slot at helper. Non-optimality of this case can be concluded in a similar way as the first assumption and we discard the proof to avoid repetition. Hence, in case the function $\textbf{F}_\varphi^*(\{ \bar E_i\}_{i=1}^N ,\{H_i\}_{i=1}^N)$ follows the waterfilling algorithm 1 to provide the optimum maximum energy constraint for $\textbf{R}_g$, the function $\textbf{R}_g(\{E_i\}_{i=1}^N,\textbf{F}_\varphi^*(\{ \bar E_i\}_{i=1}^N, \{H_i\}_{i=1}^N))$ provides the optimum rate at first time slot at (11) i.e $r_1^*$.
 \end{proof}
 Lemma 7 is the main idea of  algorithm 2 to calculate (11).
\subsection{EH transmitter and receiver with no batteries}
In this scenario, only helper can save energy for upcoming slots. If the constraint, $r_j\leq  g(E_j),\forall j$ denotes the transmitter energy saving inability and is shown by (11b$^\prime$), the problem formulation for this scenario becomes (11a,b$^\prime$-f). 

Intuitively speaking, we assume that helper transfers the harvested energy instantaneously to the receiver which is called a virtual receiver. Note that the receiver energy saving inability imposes minimum energy constraint at the virtual receiver at any time slot. Besides, harvested energy at non-battery transmitter imposes maximum energy constraint on the virtual receiver at all slots. Using lemma 1 to 6, we present algorithm 3 to find the optimal policy for (11a,b$^\prime$-f) as
\begin{equation}
\begin{aligned}
      \small { \  r_i^*=\text{min}(g(E_i),\varphi^{-1}(\bar S_i)), \forall i. }
\end{aligned}
\end{equation}

\section{Numerical Results and Conclusion}
We present numerical examples for the second and third scenarios. Suppose that $\alpha$=0.7 and $\varphi^{-1}(.)=g(.)$. We initialize $\bf{\bar E}=$[5,8,3] and $\bf{H}=$[7,1,2] in three time slots, i.e., $N=3$. Using algorithm 1 for the second scenario (\emph{full power transmitter}), we derive $\bf{q}^*$=[7.5,8,7.5] as the optimal power consumption at the receiver. It can be seen that the helper do not transfer energy at the second time slot to the receiver. For the third scenario, we initialize $\bf{\bar E}$ and $\bf{H}$ as before and set $\bf{E}$=[6.5,13.5,9]. By use of algorithm 2, we calculate the optimal policy as $\bf{r}^*$=$g$([6.5,8.25,8.25]). In consistent with lemma 7, we considered $\bf{\bar S}=\bf{q}^*$ as the maximum constraint on the transmitter for all the time slots to determine the first time slot. Then, according to algorithm 2, one unit of energy is saved at $\bf{\bar S}^*$ at the first time slot for the two remaining ones.

In summary, we studied maximizing data rate transmission over a point-to-point AWGN channel with EH transmitter, receiver and energy cooperating helper. The helper efficiently managed and transferd its harvested energy to the receiver, since the receiver needs sufficient power to decode the message. We analyzed four scenarios; When both EH transmitter and receiver have batteries, we presented a closed form optimal solution. For other scenarios (full power transmitter and non-battery receiver, battery transmitter and non-battery receiver, and EH transmitter and receiver with no batteries), we derived iterative algorithms to achieve the optimal power policies.



 \noindent\makebox[\linewidth]{\rule{9cm}{0.9pt}}
 \small { \textbf{Iterative Algorithms}

 \vspace{-0.2cm}
 \noindent\makebox[\linewidth]{\rule{9cm}{0.3pt}}
 \textbf{Initialization}\\
1: Initialize the transmitter (algorithm 2,3), receiver and helper

 energies; $\{E_i,\bar E_i,H_i\}_{i=1}^N$

\hspace{-0.5cm} \textbf{Define functions}\\
2: $g(.)$ and $\varphi(.)$;   transmission rate and decoding cost function \\

\vspace{-0.3cm}
 \hspace{-0.5cm} \textbf{Define procedure}\\
3:  $A(\{L'_i\}_{i\in \mathcal{X}})=\textbf{R}_{A}$($\{L_i\}_{i\in \mathcal{X}},\{r_i\}_{i\in \mathcal{X}}$); calculate $\{L'_i\}_{i\in \mathcal{X}}$ by

waterfilling the energies $\{L_i\}_{i\in \mathcal{X}}$,  taking into account the

maximum energy constraints $\{A^{-1}(r_i)\}_{i\in \mathcal{X}}$ at all slots.\\

\vspace{-0.3cm}
\hspace{-0.3cm}4: Choose algorithm 1,2, or 3.

\vspace{-0.2cm}
\noindent\makebox[\linewidth]{\rule{9cm}{0.3pt}}
\textbf{Algorithm 1:} (\emph{Full power transmitter and non-battery receiver})

\vspace{-0.2cm}
\noindent\makebox[\linewidth]{\rule{9cm}{0.3pt}}
5: Initialize the arrays; $\{S_i,\bar S_i=0\}_{i=1}^N$, the sets; $\nabla= \phi$, $ \bar \nabla=\{0\}$\\
\hspace{-0.5cm} 6:  \textbf{while} $\bar \nabla \neq \phi$ \textbf{do}\\
\hspace{-0.3cm} 7:  \hspace{0.1cm} Update $H_i$ by sending the whole water in $\nabla$ bins at helper

   \hspace{0.15cm} forward to the first bin which is not $\nabla$ member\\
8: \hspace{0.25cm}$S_i \leftarrow \bar E_i+\alpha H_i$ for all $i$\\
9: \hspace{0.15cm} Waterfill $\{S_i\}_{i=1,i\notin \nabla}^N$ and let the result be $\{\bar S_i\}_{i=1,i\notin \nabla}^N$\\
10:\hspace{0.25cm}Regenerate $\bar \nabla$ by finding the bins at which $\bar S_i<\bar E_i$  and put

\hspace{0.25cm} those slot numbers to the set $\bar \nabla$ $\hspace{0.2cm}$\\
11:\hspace{0.12cm} $\nabla\leftarrow \nabla \cup \bar \nabla$\\
12: \textbf{end while}\\
13:$\{\bar S_i\}_{i\in \nabla}\leftarrow \{\bar E_i\}_{i\in \nabla}$\\
14: \textbf{Return}

\vspace{-0.2cm}
\noindent\makebox[\linewidth]{\rule{9cm}{0.9pt}}
\textbf{Algorithm 2:} (\emph{Battery transmitter and non-battery receiver})

\vspace{-0.2cm}
\noindent\makebox[\linewidth]{\rule{9cm}{0.3pt}}
5: Set $k=1$\\
6: \textbf{while} $k\leq N$ \textbf{do}\\
7: $\hspace{0.1cm}$ $\{\bar r_i\}_{i=k}^N \leftarrow \{\varphi^{-1}(\bar  S_i)\}_{i=k}^N$ using  algorithm 1 for bins $k$ to $N$\\
8: \hspace{0.1cm} $\{r_i\}_{i=k}^N \leftarrow \textbf{R}_{g}(\{E_i\}_{i=k}^N , \{\bar r_i\}_{i=k}^N)$\\
9: $\hspace{0.10cm}$ $\textbf{if}$ $\bar E_k<\varphi(r_k)$\\
10: $\hspace{0.15cm}$ Update $\{H_i\}_{i=1}^N$ by saving $\dfrac{[\varphi(\bar r_k)-\varphi(r_k)]^+}{\alpha}$ amount of

$\hspace{0.4cm}$ energy at helper at $k^{th}$  slot to $k+1^{th}$ slot.\\
11: $\hspace{0.01cm}$ $\textbf{else}$\\
12: $\hspace{0.2cm}$ Update $\{H_i\}_{i=1}^N$ by saving $\dfrac{[\varphi(\bar r_k)-\bar E_k]^+}{\alpha}$ amount of

 $\hspace{0.33cm}$ energy at helper at $k^{th}$ slot to $k+1^{th}$ slot.\\
13:$\hspace{0.1cm}$ $\textbf{end if}$\\
14: $k \leftarrow k+1$\\
15: \textbf{end while}\\
16: $\{r_i^*\}_{i=1}^N \leftarrow \textbf{R}( \{E_i\}_{i=1}^N, \{\varphi^{-1}(\bar E_i+\alpha H_i)\}_{i=1}^N)$\\
17: \textbf{Return}

\vspace{-0.2cm}
\noindent\makebox[\linewidth]{\rule{9cm}{0.9pt}}
\textbf{Algorithm 3:} (\emph{EH transmitter and receiver with no batteries})

 \vspace{-0.2cm}
 \noindent\makebox[\linewidth]{\rule{9cm}{0.3pt}}
5: Initialize the arrays; $\{S_i,\bar S_i=0\}_{i=1}^N$, the sets; $\nabla= \phi$, $ \bar \nabla=\{0\}$\\
\hspace{-0.5cm} 4: \textbf{while} $\bar \nabla \neq \phi$ \textbf{do}\\
6:  \hspace{0.1cm} Update $H_i$ by sending the whole water in $\nabla$ bins at helper

   \hspace{0.2cm}forward to the first bin which is not $\nabla$ member\\
7: \hspace{0.2cm}$S_i \leftarrow \bar E_i+\alpha H_i$ for all $i$\\
8: \hspace{0.1cm} $\{\bar S_i\}_{i=1, i\notin \nabla}^N  \leftarrow \varphi(\textbf{R}_{\varphi^{-1}}(\{S_i\}_{i=1, i\notin \nabla}^N,\{g(E_i)\}_{i=1, i\notin \nabla}^N))$\\
9:\hspace{0.32cm}Regenerate $\bar \nabla$ by finding the bins at which $\bar S_i<\bar E_i$  and put

\hspace{0.25cm}those slot numbers to the set $\bar \nabla$ $\hspace{0.2cm}$\\
10\hspace{0.17cm} $\nabla\leftarrow \nabla \cup \bar \nabla$\\
11: \textbf{end while}\\
12:$\{\bar S_i\}_{i\in \nabla}\leftarrow \{\bar E_i\}_{i\in \nabla}$\\
13: \textbf{Return} }

\vspace{-0.24cm}
\noindent\makebox[\linewidth]{\rule{9cm}{0.9pt}}

\end{document}